
\documentclass[paper,a4paper]{IEEEtran}


\usepackage{amsthm}
\usepackage{amsmath}
\usepackage{amsfonts,amssymb,amsmath}
\usepackage{graphicx}
\usepackage{epsfig}
\usepackage{caption}
\usepackage[applemac]{inputenc}
\usepackage{latexsym}
\usepackage{dsfont}
\usepackage{tikz}
\usetikzlibrary{arrows,shapes,backgrounds,patterns,decorations.pathreplacing}

\newtheorem{definition}{Definition}
\newtheorem{corollary}{Corollary}
\newtheorem{proposition}{Proposition}
\newtheorem{theorem}{Theorem}
\newtheorem{lemma}{Lemma}

\newtheorem{remark}{Remark}

\DeclareMathOperator{\DMC}{DMC}

\DeclareMathOperator{\supp}{supp}

\DeclareMathOperator{\MP}{MP}

\begin{document}

\sloppy

\title{On the Convergence of the Polarization Process in the Noisiness/Weak-$\ast$ Topology}

\author{
  \IEEEauthorblockN{Rajai Nasser\\}
  \IEEEauthorblockA{
Email: rajai.nasser@alumni.epfl.ch} 
}



\maketitle

\begin{abstract}
Let $W$ be a channel where the input alphabet is endowed with an Abelian group operation, and let $(W_n)_{n\geq 0}$ be Ar{\i}kan's channel-valued polarization process that is obtained from $W$ using this operation. We prove that the process $(W_n)_{n\geq 0}$ converges almost surely to deterministic homomorphism channels in the noisiness/weak-$\ast$ topology. This provides a simple proof of multilevel polarization for a large family of channels, containing among others, discrete memoryless channels (DMC), and channels with continuous output alphabets. This also shows that any continuous channel functional converges almost surely (even if the functional does not induce a submartingale or a supermartingale).
\end{abstract}

\section{Introduction}
Polar codes are a family of capacity-achieving codes which were first introduced by Ar{\i}kan for binary input channels \cite{Arikan}. The construction of polar codes relies on a phenomenon called \emph{polarization}: A collection of independent copies of a channel is converted into a collection of synthetic channels that are extreme, i.e., almost useless or almost perfect.

The construction of polar codes was later generalized for channels with arbitrary (but finite) input alphabet \cite{SasogluTelAri,ParkBarg,SahebiPradhan,RajTelA,RajErgI,RajErgII}. Note that for channels where the input alphabet size is not prime, polarization is not necessarily a two-level polarization (to useless and perfect channels): We may have multilevel polarization where the polarized channels can be neither useless nor perfect.

In this paper, we are interested in the general multilevel polarization phenomenon which happens when we apply an Ar{\i}kan-style transformation that is based on an Abelian group operation. It was shown in \cite{SahebiPradhan} that as the number of polarization steps becomes large, the behavior of synthetic channels resembles that of deterministic homomorphism channels projecting their input onto a quotient group. This resemblance was formulated in \cite{SahebiPradhan} using Bhattacharyya parameters. We may say (informally), that as the number of polarization steps goes to infinity, the synthetic channels ``converge" to deterministic homomorphism channels. One reason why this statement is informal is because the synthetic channels do not have the same output alphabet, so in order to make the statement formal, we must define a space in which we can topologically compare channels with different output alphabets.

In \cite{RajTop}, we defined the space of all channels with fixed input alphabet and arbitrary but finite output alphabet. This space was first quotiented by an equivalence relation, and then several topological structures were defined. In this paper, we show that Ar{\i}kan's polarization process does converge in the noisiness/weak-$\ast$ topology to deterministic homomorphism channels. The proof uses the Blackwell measure of channels\footnote{Blackwell measures was used in \cite{RaginskyBlackwellPolar} and \cite{GoelaRaginsky} for analyzing the polarization of binary-input channels.}, and hence can be generalized to all channels whose equivalence class can be determined by the Blackwell measure. This family of channels contains, among others, all discrete memoryless channels and all channels with continuous output alphabets. Another advantage of our proof is that it implies the convergence of all channel functionals that are continuous in the noisiness/weak-$\ast$ topology. Therefore, we have convergence of those functionals even if they do not induce a submartingale or a supermartingale process.

In Section \ref{secPreliminaries}, we introduce the preliminaries of this paper. In Section \ref{secPolarPhen}, we recall the multilevel polarization phenomenon. In Section \ref{secConvergence}, we show the convergence of the polarization process in the noisiness/weak-$\ast$ topology. For simplicity, we only discuss discrete memoryless channels, but the proof is valid for any channel whose equivalence class is determined by the Blackwell measure.

\section{Preliminaries}

\label{secPreliminaries}

\subsection{Meta-Probability Measures}

Let $\mathcal{X}$ be a finite set. The set of probability distributions on $\mathcal{X}$ is denoted as $\Delta_\mathcal{X}$. We associate $\Delta_{\mathcal{X}}$ with its Borel $\sigma$-algebra. A \emph{meta-probability measure} on $\mathcal{X}$ is a probability measure on the Borel sets of $\Delta_{\mathcal{X}}$. It is called a meta-probability measure because it is a probability measure on the set of probability distributions on $\mathcal{X}$. We denote the set of meta-probability measures on $\mathcal{X}$ as $\mathcal{MP}(\mathcal{X})$.

A meta-probability measure $\MP$ on $\mathcal{X}$ is said to be \emph{balanced} if
$$\int_{\Delta_{\mathcal{X}}}p\cdot d\MP(p)=\pi_\mathcal{X},$$
where $\pi_\mathcal{X}$ is the uniform probability distribution on $\mathcal{X}$. The set of balanced meta-probability measures on $\mathcal{X}$ is denoted as $\mathcal{MP}_b(\mathcal{X})$. The set of balanced and finitely supported meta-probability measures on $\mathcal{X}$ is denoted as $\mathcal{MP}_{bf}(\mathcal{X})$.

\subsection{$\DMC$ Spaces}

Let $\mathcal{X}$ and $\mathcal{Y}$ be two finite sets. The set of discrete memoryless channels (DMC) with input alphabet $\mathcal{X}$ and output alphabet $\mathcal{Y}$ is denoted as $\DMC_{\mathcal{X},\mathcal{Y}}$. The set of channels with input alphabet $\mathcal{X}$ is defined as
$$\DMC_{\mathcal{X},\ast}=\coprod_{n\geq 1}\DMC_{\mathcal{X},[n]},$$
where $[n]=\{1,\ldots,n\}$ and $\coprod$ is the disjoint union symbol. The $\ast$ symbol in $\DMC_{\mathcal{X},\ast}$ means that the output alphabet is arbitrary but finite.

Let $\mathcal{X}$, $\mathcal{Y}$ and $\mathcal{Z}$ be three finite sets. Let $W\in\DMC_{\mathcal{X},\mathcal{Y}}$ and $V\in\DMC_{\mathcal{Y},\mathcal{Z}}$. The composition of $V$ and $W$ is the channel $V\circ W\in\DMC_{\mathcal{X},\mathcal{Z}}$ defined as:
$$(V\circ W)(z|x)=\sum_{y\in\mathcal{Y}}V(z|y)W(y|x).$$

A channel $W\in\DMC_{\mathcal{X},\mathcal{Y}}$ is said to be \emph{degraded} from another channel $W'\in\DMC_{\mathcal{X},\mathcal{Y}'}$ if there exists a channel $V'\in\DMC_{\mathcal{Y}',\mathcal{Y}}$ such that $W=V'\circ W'$. Two channels are said to be \emph{equivalent} if each one is degraded from the other. It is well known that if two channels are equivalent then every code has the same probability of error (under ML decoding) for both channels. This is why it makes sense from an information-theoretic point of view to identify equivalent channels and consider them as one object in the ``space of equivalent channels". The quotient of $\DMC_{\mathcal{X},\ast}$ by the equivalence relation is denoted as $\DMC_{\mathcal{X},\ast}^{(o)}$. The equivalence class of a channel $W\in\DMC_{\mathcal{X},\ast}$ is denoted as $\hat{W}$.

\subsection{A Necessary and Sufficient Condition for Degradedness}

Let $\mathcal{U},\mathcal{X},\mathcal{Y}$ be three finite sets and let $W\in\DMC_{\mathcal{X},\mathcal{Y}}$. For every $p\in\Delta_{\mathcal{U}\times\mathcal{X}}$, define
$$P_c(p,W)=\sup_{D\in\DMC_{\mathcal{Y},\mathcal{U}}}\sum_{\substack{u\in\mathcal{U},\\x\in\mathcal{X},\\y\in\mathcal{Y}}}p(u,x)W(y|x)D(u|y).$$

$P_c(p,W)$ can be interpreted as follows: Let $(U,X)$ be a pair of random variables in $\mathcal{U}\times\mathcal{X}$. Send $X$ through the channel $W$ and let $Y$ be the output. $P_c(p,W)$ can be seen as the optimal probability of correctly guessing $U$ from $Y$ among all random decoders $D\in\DMC_{\mathcal{Y},\mathcal{U}}$.

Now let $\mathcal{X},\mathcal{Y},\mathcal{Y}'$ be three finite sets and let $W\in\DMC_{\mathcal{X},\mathcal{Y}}$ and $W'\in\DMC_{\mathcal{X},\mathcal{Y}'}$. Buscemi proved in \cite{Buscemi} that $W$ is degraded from $W'$ if and only if $P_c(p,W)\leq P_c(p,W')$ for every $p\in\Delta_{\mathcal{U}\times\mathcal{X}}$ and every finite set $\mathcal{U}$. This means that $W$ and $W'$ are equivalent if and only if $P_c(p,W)= P_c(p,W')$ for every $p\in\Delta_{\mathcal{U}\times\mathcal{X}}$ and every finite set $\mathcal{U}$. Therefore, if $p\in\Delta_{\mathcal{U}\times\mathcal{X}}$ and $\hat{W}\in\DMC_{\mathcal{X},\ast}^{(o)}$, we can define $P_c(p,\hat{W})=P_c(p,W)$ for any $W\in\hat{W}$.

\subsection{Blackwell Measures}

\label{subsecBlackwell}

Let $W\in\DMC_{\mathcal{X},\mathcal{Y}}$. Let $(X,Y)$ be a pair of random variables in $\mathcal{X}\times\mathcal{Y}$, which is distributed as $$\mathbb{P}_{X,Y}(x,y)=\frac{1}{|\mathcal{X}|}W(y|x).$$ In other words, $X$ is uniformly distributed in $\mathcal{X}$ and $Y$ is the output of the channel $W$ when $X$ is the input. For every $y\in\mathcal{Y}$ satisfying $\mathbb{P}_Y(y)>0$, let $W_y^{-1}\in\Delta_{\mathcal{X}}$ be the posterior probability distribution of the input assuming $y$ was received. More precisely,
$$W_y^{-1}(x)=\mathbb{P}_{X|Y}(x|y)=\frac{W(y|x)}{\sum_{x'\in\mathcal{X}}W(y|x')}.$$
The \emph{Blackwell measure} of $W$ is the meta-probability measure $\MP_W\in\mathcal{MP}(\mathcal{X})$, which describes the random variable $W_Y^{-1}$. It is easy to see that
$$\MP_W=\sum_{\substack{y\in\mathcal{Y}:\\\mathbb{P}_Y(y)>0}}\mathbb{P}_Y(y)\delta_{W_y^{-1}}\in\mathcal{MP}(\mathcal{X}).$$

The following proposition, which is easy to prove, characterizes the Blackwell measures of DMCs:

\begin{proposition}
\cite{torgersen}\label{propBijection} A meta-probability measure $\MP\in\mathcal{MP}(\mathcal{X})$ is the Blackwell measure of a DMC with input alphabet $\mathcal{X}$ if and only if $\MP\in\mathcal{MP}_{bf}(\mathcal{X})$.
\end{proposition}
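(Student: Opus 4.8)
The plan is to prove both implications by direct computation from the explicit formula for the Blackwell measure, with balancedness playing the decisive role in the converse construction.

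For the forward implication, suppose $\MP=\MP_W$ for some $W\in\DMC_{\mathcal{X},\mathcal{Y}}$. Finite support is immediate: since $\mathcal{Y}$ is finite, the displayed formula $\MP_W=\sum_{y:\,\mathbb{P}_Y(y)>0}\mathbb{P}_Y(y)\,\delta_{W_y^{-1}}$ exhibits $\MP_W$ as a convex combination of at most $|\mathcal{Y}|$ Dirac masses. For balancedness I would compute $\int_{\Delta_\mathcal{X}}p\,d\MP_W(p)=\sum_{y}\mathbb{P}_Y(y)\,W_y^{-1}$ and evaluate its $x$-coordinate: substituting $W_y^{-1}(x)=\mathbb{P}_{X|Y}(x|y)$ and using the chain rule gives $\sum_y\mathbb{P}_Y(y)\mathbb{P}_{X|Y}(x|y)=\sum_y\mathbb{P}_{X,Y}(x,y)=\mathbb{P}_X(x)=1/|\mathcal{X}|$, where the last equality holds because $X$ is uniform by the definition of the Blackwell measure. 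Hence $\int p\,d\MP_W=\pi_\mathcal{X}$, so $\MP_W\in\mathcal{MP}_{bf}(\mathcal{X})$.

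For the converse, write $\MP=\sum_{i=1}^n\lambda_i\,\delta_{p_i}$ with the $p_i\in\Delta_\mathcal{X}$ distinct, $\lambda_i>0$, and $\sum_i\lambda_i=1$; balancedness then reads $\sum_i\lambda_i p_i=\pi_\mathcal{X}$. I would construct a realizing channel explicitly: take $\mathcal{Y}=[n]$ and set $W(i|x)=|\mathcal{X}|\,\lambda_i\,p_i(x)$. The crucial verification is that $W$ is a bona fide DMC: nonnegativity is clear, and for each input $x$ the output probabilities normalize correctly since $\sum_{i}W(i|x)=|\mathcal{X}|\sum_i\lambda_i p_i(x)=|\mathcal{X}|\,\pi_\mathcal{X}(x)=1$ --- and this is exactly where the balanced hypothesis is used. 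A short computation then confirms $\mathbb{P}_Y(i)=\sum_x\frac{1}{|\mathcal{X}|}W(i|x)=\lambda_i$ and $W_i^{-1}(x)=W(i|x)/\sum_{x'}W(i|x')=p_i(x)$, whence $\MP_W=\sum_i\lambda_i\,\delta_{p_i}=\MP$.

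I do not expect a serious obstacle, since the statement is at bottom a bookkeeping identity. The only subtle point is the converse construction, where balancedness is precisely the condition guaranteeing that the candidate transition matrix is stochastic; without it the map $W$ would fail to normalize. It is also worth noting explicitly that $\lambda_i>0$ ensures $\mathbb{P}_Y(i)>0$ and $\sum_{x'}W(i|x')>0$, so that each posterior $W_i^{-1}$ is well defined and $\MP_W$ carries exactly the prescribed atoms $p_i$ with the prescribed weights $\lambda_i$.
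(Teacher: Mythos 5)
Your proof is correct. The paper itself offers no proof of this proposition---it is cited from \cite{torgersen} and described as ``easy to prove''---and your argument is exactly the standard one being delegated: the forward direction via $\sum_y \mathbb{P}_Y(y)W_y^{-1}(x)=\mathbb{P}_X(x)=1/|\mathcal{X}|$ together with the at-most-$|\mathcal{Y}|$-atoms observation, and the converse via the canonical channel $W(i|x)=|\mathcal{X}|\,\lambda_i\,p_i(x)$ on output alphabet $[n]$, where balancedness is precisely what makes $W$ row-stochastic. You also handled the one point a careless write-up could slip on: taking the $p_i$ distinct guarantees the posteriors $W_i^{-1}=p_i$ are distinct, so no atoms of $\MP_W$ merge and the realized measure is exactly $\MP$.
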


The following proposition shows that the Blackwell measure characterizes the equivalence class of a channel:

\begin{proposition}
\cite{torgersen} \label{propBlackwellEquivalence} Two channels $W\in\DMC_{\mathcal{X},\mathcal{Y}}$ and $W'\in\DMC_{\mathcal{X},\mathcal{Y}'}$ are equivalent if and only if $\MP_W=\MP_{W'}$.
\end{proposition}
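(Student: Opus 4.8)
The plan is to connect Buscemi's degradedness criterion, recalled above, to the Blackwell measure through an explicit formula for $P_c$, and then to show that this criterion captures exactly the information carried by $\MP_W$. First I would observe that for a fixed $p\in\Delta_{\mathcal U\times\mathcal X}$ the objective defining $P_c(p,W)$ is linear in each row $D(\cdot\mid y)$ of the decoder, so the supremum is attained at a deterministic decoder and $P_c(p,W)=\sum_{y}\max_{u}\sum_{x}p(u,x)W(y\mid x)$. I would then rewrite the channel law through its Blackwell decomposition, $W(y\mid x)=|\mathcal X|\,\mathbb P_Y(y)\,W_y^{-1}(x)$ (terms with $\mathbb P_Y(y)=0$ vanish), to obtain
$$P_c(p,W)=|\mathcal X|\sum_{y}\mathbb P_Y(y)\,f_p\!\left(W_y^{-1}\right)=|\mathcal X|\int_{\Delta_{\mathcal X}}f_p\,d\MP_W,\qquad f_p(q)=\max_{u}\sum_{x}p(u,x)q(x),$$
where the last equality uses $\MP_W=\sum_{y}\mathbb P_Y(y)\,\delta_{W_y^{-1}}$ and $f_p$, being a maximum of finitely many linear functionals, is continuous on $\Delta_{\mathcal X}$.

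With this bridge in place, the forward implication is immediate: if $\MP_W=\MP_{W'}$ then $\int f_p\,d\MP_W=\int f_p\,d\MP_{W'}$ for every $p$, so $P_c(p,W)=P_c(p,W')$ for all $\mathcal U$ and all $p$, whence $W$ and $W'$ are equivalent by Buscemi's criterion. For the converse, equivalence yields $\int f_p\,d\MP_W=\int f_p\,d\MP_{W'}$ for every $\mathcal U$ and every $p$, and the goal is to upgrade this to equality of integrals against all of $C(\Delta_{\mathcal X})$, which (as $\MP_W,\MP_{W'}$ are finite Borel measures on the compact set $\Delta_{\mathcal X}$) forces $\MP_W=\MP_{W'}$.

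The heart of the argument, and the step I expect to be the main obstacle, is to show that the linear span of $\{f_p\}$ is uniformly dense in $C(\Delta_{\mathcal X})$. The only delicate point is that each $f_p$ is a maximum of linear functionals with \emph{nonnegative} coefficients, and I would argue that this constraint does not shrink the span. Given arbitrary affine functions on $\Delta_{\mathcal X}$, written as $\langle a_i,\cdot\rangle$, one has on the simplex $\max_i\langle a_i,q\rangle=\max_i\langle a_i+M\mathbf 1,q\rangle-M$ since $\langle\mathbf 1,q\rangle=1$; choosing $M$ large makes every $a_i+M\mathbf 1$ nonnegative, so after normalization $\max_i\langle a_i,\cdot\rangle$ is a positive multiple of some $f_p$ plus a constant, and constants themselves arise from a singleton $\mathcal U$. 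Hence the span contains every convex piecewise-linear function on $\Delta_{\mathcal X}$, and therefore, by taking differences, every piecewise-linear function. Since piecewise-linear functions are uniformly dense in $C(\Delta_{\mathcal X})$, equality of integrals extends from $\{f_p\}$ to all continuous functions, and $\MP_W=\MP_{W'}$ follows. The parts I would treat most carefully are the normalization bookkeeping in the $M\mathbf 1$ shift and the density of piecewise-linear functions, both of which are standard once the bridge formula is established.
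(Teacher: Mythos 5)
Your proposal is correct, but it is worth noting that the paper does not prove this proposition at all: it is quoted from \cite{torgersen}, where it is obtained through the classical theory of comparison of statistical experiments (standard measures and the Blackwell--Sherman--Stein circle of ideas). Your argument is therefore a genuinely different, self-contained route, and an attractive one because it uses only ingredients the paper has already set up: Buscemi's criterion and the explicit form of $\MP_W$. The bridge formula $P_c(p,W)=|\mathcal{X}|\int_{\Delta_{\mathcal{X}}}f_p\,d\MP_W$ with $f_p(q)=\max_u\sum_x p(u,x)q(x)$ is exactly right (the supremum over randomized decoders is attained deterministically since the objective is affine in each $D(\cdot|y)$, and the decomposition $W(y|x)=|\mathcal{X}|\,\mathbb{P}_Y(y)W_y^{-1}(x)$ is immediate from the definitions), and it gives the forward direction at once. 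For the converse, your density argument goes through: the shift $a_i\mapsto a_i+M\mathbf{1}$ costs only an additive constant on the simplex, constants are themselves of the form $f_p$ with $|\mathcal{U}|=1$ and $p(1,\cdot)$ uniform, so the span of $\{f_p\}$ contains all functions $\max_i\langle a_i,\cdot\rangle$, hence (taking differences, using the standard max-affine representation of piecewise-linear functions) all continuous piecewise-linear functions, which are uniformly dense in $C(\Delta_{\mathcal{X}})$ by the lattice form of Stone--Weierstrass; two finite Borel measures on the compact set $\Delta_{\mathcal{X}}$ agreeing on $C(\Delta_{\mathcal{X}})$ coincide. Two trivial points to tidy: the normalization constant $S=\sum_{i,x}(a_i(x)+M)$ must be checked positive (if it vanishes the max is already constant), and strictly you only need agreement of integrals against finitely many $f_p$ per comparison since both measures are finitely supported, though the general density argument is cleaner. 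As a bonus, your bridge formula is essentially the mechanism behind the paper's identification of the noisiness topology with the weak-$\ast$ topology, so your proof fits the paper's framework more organically than the citation it replaces.
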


For every $\hat{W}\in\DMC_{\mathcal{X},\ast}^{(o)}$, define $\MP_{\hat{W}}=\MP_W$ for any $W\in\hat{W}$. Proposition \ref{propBlackwellEquivalence} shows that $\MP_{\hat{W}}$ is well defined.

\subsection{The Noisiness/Weak-$\ast$ Topology}

In \cite{RajTop}, we defined the \emph{noisiness metric} $d_{\mathcal{X},\ast}^{(o)}$ on $\DMC_{\mathcal{X},\ast}^{(o)}$ as follows:
$$d_{\mathcal{X},\ast}^{(o)}(\hat{W},\hat{W}')=\sup_{\substack{m\geq 1,\\p\in\Delta_{[m]\times\mathcal{X}}}}|P_c(p,\hat{W})-P_c(p,\hat{W}')|,$$
where $[m]=\{1,\ldots,m\}$.

$d_{\mathcal{X},\ast}^{(o)}(\hat{W},\hat{W}')$ is called the noisiness metric because it compares the ``noisiness" of $\hat{W}$ with that of $\hat{W}'$: If $P_c(p,\hat{W})$ is close to $P_c(p,\hat{W}')$ for every random encoder $p$, then $\hat{W}$ and $\hat{W}'$ have close ``noisiness levels".

The topology on $\DMC_{\mathcal{X},\ast}^{(o)}$ which is induced by the metric $d_{\mathcal{X},\ast}^{(o)}$ is denoted as $\mathcal{T}_{\mathcal{X},\ast}^{(o)}$.

Another way to ``topologize" the space $\DMC_{\mathcal{X},\ast}^{(o)}$ is through Blackwell measures: Proposition \ref{propBijection} implies that the mapping $\hat{W}\rightarrow\MP_{\hat{W}}$ is a bijection from $\DMC_{\mathcal{X},\ast}^{(o)}$ to $\mathcal{MP}_{bf}(\mathcal{X})$. We call this mapping the \emph{canonical bijection} from $\DMC_{\mathcal{X},\ast}^{(o)}$ to $\mathcal{MP}_{bf}(\mathcal{X})$. By choosing a topology on $\mathcal{MP}_{bf}(\mathcal{X})$, we can construct a topology on $\DMC_{\mathcal{X},\ast}^{(o)}$ through the canonical bijection. We showed in \cite{RajTop} that the weak-$\ast$ topology is exactly the same as $\mathcal{T}_{\mathcal{X},\ast}^{(o)}$. This is why we call $\mathcal{T}_{\mathcal{X},\ast}^{(o)}$ the \emph{noisiness/weak-$\ast$ topology}.

\begin{remark}
\label{remIdent}
Since we identify $\DMC_{\mathcal{X},\ast}^{(o)}$ and $\mathcal{MP}_{bf}(\mathcal{X})$ through the canonical bijection, we can use $d_{\mathcal{X},\ast}^{(o)}$ to define a metric on $\mathcal{MP}_{bf}(\mathcal{X})$. Furthermore, since $\mathcal{MP}_{bf}(\mathcal{X})$ is dense in $\mathcal{MP}_{b}(\mathcal{X})$ (see e.g., \cite{RajTop}), we can extend the definition of $d_{\mathcal{X},\ast}^{(o)}$ to $\mathcal{MP}_{b}(\mathcal{X})$ by continuity. Similarly, we can extend the definition of any channel parameter or operation which is continuous in the noisiness/weak-$\ast$ topology (such as the symmetric capacity, Ar{\i}kan's polar transformations, etc. \cite{RajCont}) to $\mathcal{MP}_{b}(\mathcal{X})$.
\end{remark}

\section{The Polarization Phenomenon}

\label{secPolarPhen}

\subsection{Useful Notations}

Throughout this paper, $(G,+)$ denotes a finite Abelian group.

If $W$ is a channel, we denote the \emph{symmetric capacity}\footnote{The symmetric capacity of a channel is the mutual information between a uniformly distributed input and the output.} of $W$ as $I(W)$.

For every subgroup $H$ of $G$, define the channel $D_H\in\DMC_{G,G/H}$ as
$$D_H(A|x)=\begin{cases}1\quad&\text{if}\;x\in A,\\0\quad&\text{otherwise.}\end{cases}$$
In other words, $D_H$ is the deterministic channel where the output is the coset to which the input belongs. It is easy to see that $I(D_H)=\log|G/H|$.

We denote the set $\{D_H:\;H\;\text{is a subgroup of}\;G\}$ as $\mathcal{DH}_G$.

Now let $\mathcal{Y}$ be a finite set and let $W\in\DMC_{G,\mathcal{Y}}$. For every subgroup $H$ of $G$, define the channel $W[H]\in\DMC_{G/H,\mathcal{Y}}$ as
$$W[H](y|A)=\frac{1}{|A|}\sum_{x\in A}W(y|x)=\frac{1}{|H|}\sum_{x\in A}W(y|x).$$
\begin{remark}
Let $X$ be a random variable uniformly distributed in $G$ and let $Y$ be the output of the channel $W$. It is easy to see that $I(W[H])=I(X\bmod H,Y)$.
\end{remark}

Let $\delta>0$. We say that a channel $W\in\DMC_{G,\mathcal{Y}}$ is \emph{$\delta$-determined} by a subgroup $H$ of $(G,+)$ if $$\big|I(W)-\log|G/H|\big|<\delta\text{ and }\big|I(W[H])-\log|G/H|\big|<\delta.$$ We say that $W$ is \emph{$\delta$-determined} if there exists at least one subgroup $H$ which $\delta$-determines $W$. It is easy to see that if $\delta$ is small enough, there exists at most one subgroup that $\delta$-determines $W$.

Intuitively, if $\delta$ is small and $W$ is $\delta$-determined by a subgroup $H$, then the channel $W$ is ``almost-equivalent" to $D_H$: Let $X$ be a random variable that is uniformly distributed in $G$ and let $Y$ be the output of $W$ when $X$ is the input. We have:
\begin{itemize}
\item The inequality $\big|I(X\bmod H;Y)-\log|G/H|\big|<\delta$ means that $X\bmod H$ can be determined from $Y$ with high probability.
\item the inequality $\big|I(X;Y)-\log|G/H|\big|<\delta$ means that there is almost no other information about $X$ which can be determined from $Y$.
\end{itemize}
Due to the above two observations, we can (informally) say that if $W$ is $\delta$-determined by $H$, then $W$ is ``almost equivalent" to $D_H$.

\subsection{The Polarization Process}

Let $W\in\DMC_{G,\mathcal{Y}}$ be a channel with input alphabet $G$. Define the channels $W^-\in\DMC_{G,\mathcal{Y}^2}$ and $W^+\in\DMC_{G,\mathcal{Y}^2\times G}$ as follows:
$$W^-(y_1,y_2|u_1)=\frac{1}{|G|}\sum_{u_2\in G}W(y_1|u_1+u_2)W(y_2|u_2),$$
and
$$W^+(y_1,y_2,u_1|u_2)=\frac{1}{|G|}W(y_1|u_1+u_2)W(y_2|u_2).$$

For every $n\geq 1$ and every $s=(s_1,\ldots,s_n)\in\{-,+\}^n$, define $W^s=(\cdots((W^{s_1})^{s_2})\cdots)^{s_n}$.

\begin{remark}
It can be shown that if $W$ and $V$ are equivalent, then $W^-$ (resp. $W^+$) and $V^-$ (resp. $V^+$) are equivalent. This allows us to write $\hat{W}^-$ and $\hat{W}^+$ to denote $\widehat{W^-}$ and $\widehat{W^+}$, respectively.
\end{remark}

It was shown in \cite{ParkBarg}, \cite{SahebiPradhan} and \cite{RajTelA} that as $n$ becomes large, the behavior of almost all the synthetic channels $(W^s)_{s\in\{-,+\}^n}$ approaches the behavior of deterministic homomorphism channels projecting their input onto quotient groups.

One way to formalize the above statement was given in \cite{RajTelA} as follows: For every $\delta>0$, we have
\begin{equation}
\label{eqPolarization}
\begin{aligned}
\lim_{n\to\infty}\frac{1}{2^n}\big|\big\{s\in&\{-,+\}^n:\;W^s\;\text{is}\;\delta\text{-determined}\big\}\big|=1.
\end{aligned}
\end{equation}

\begin{definition}
\label{defPolProcess}
Let $(B_n)_{n\geq 1}$ be a sequence of independent and uniformly distributed Bernoulli random variables in $\{-,+\}$. Define the channel-valued random process $(W_n)_{n\geq 0}$ as follows:
\begin{itemize}
\item $W_0=W$.
\item $W_n=W_{n-1}^{B_n}=W^{(B_1,\ldots,B_n)}$ if $n\geq 1$.
\end{itemize}
\end{definition}

Equation \eqref{eqPolarization} can be rewritten as:
\begin{equation}
\label{eqPolarProb}
\lim_{n\to\infty}\mathbb{P}[\{W_n\;\text{is}\;\delta\text{-determined}\}]=1.
\end{equation}

One informal way to interpret Equation \eqref{eqPolarProb} is to say that ``the process $(W_n)_{n\geq 0}$ converges to channels in $\mathcal{DH}_G$". This statement will be made formal in the following section.

\section{Convergence of the Polarization Process}

\label{secConvergence}

Throughout this section, we identify a channel $W\in\DMC_{G,\ast}^{(o)}$ with its Blackwell measure $\MP_W\in\mathcal{MP}_{bf}(G)$. We also extend the definition of the $+$ and $-$ operations to all balanced measures in $\mathcal{MP}_b(G)$ (as discussed in Remark \ref{remIdent}).

\begin{lemma}
\label{lemConvergence}
Let $(W_n)_{n\geq 0}$ be the channel-valued process defined in Definition \ref{defPolProcess}. Almost surely, the sequence $\big(|I(W_n^-)-I(W_n)|\big)_{n\geq 0}$ converges to zero.
\end{lemma}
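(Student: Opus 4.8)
The plan is to exploit the martingale structure of the symmetric-capacity process together with the fact that the two branches $W_n^-$ and $W_n^+$ are the two children of $W_n$ under Ar{\i}kan's transform. Recall the fundamental conservation identity for the polar transform, $I(W^-)+I(W^+)=2I(W)$, which holds for any channel with input alphabet $G$ (it expresses the chain rule for the mutual information of the pair $(U_1,U_2)$ through two independent copies of $W$). In terms of the process, since $W_{n+1}=W_n^{B_{n+1}}$ with $B_{n+1}$ uniform on $\{-,+\}$ and independent of the past, this identity gives $\mathbb{E}[I(W_{n+1})\mid W_n]=\tfrac12 I(W_n^-)+\tfrac12 I(W_n^+)=I(W_n)$. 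Hence $\big(I(W_n)\big)_{n\geq 0}$ is a martingale with respect to the natural filtration $\mathcal{F}_n=\sigma(B_1,\ldots,B_n)$.

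First I would observe that this martingale is bounded: $0\leq I(W_n)\leq \log|G|$ for every $n$, so $(I(W_n))_{n\geq 0}$ is an $L^1$-bounded (indeed uniformly bounded) martingale. By Doob's martingale convergence theorem it converges almost surely to some integrable limit $I_\infty$. The key consequence I want is not the limit itself but the fact that consecutive increments vanish: since the martingale converges almost surely, we have $I(W_{n+1})-I(W_n)\to 0$ almost surely as $n\to\infty$.

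The decisive step is to translate this increment decay into the claimed statement about $|I(W_n^-)-I(W_n)|$. Using the conservation identity to eliminate $I(W_n^+)=2I(W_n)-I(W_n^-)$, the two possible increments from $W_n$ are $I(W_n^-)-I(W_n)$ (when $B_{n+1}=-$) and $I(W_n^+)-I(W_n)=I(W_n)-I(W_n^-)$ (when $B_{n+1}=+$). In absolute value both equal exactly $|I(W_n^-)-I(W_n)|$. Therefore, regardless of the realized value of $B_{n+1}$,
\begin{equation}
\big|I(W_{n+1})-I(W_n)\big|=\big|I(W_n^-)-I(W_n)\big|.
\end{equation}
Since the left-hand side tends to zero almost surely by the previous step, so does the right-hand side, which is precisely the assertion of the lemma.

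\textbf{The main obstacle} I anticipate is justifying the conservation identity $I(W^-)+I(W^+)=2I(W)$ in the present generality, namely for arbitrary channels over the Abelian group $G$ and, via the extension discussed in Remark \ref{remIdent}, for all balanced measures in $\mathcal{MP}_b(G)$. For genuine DMCs this is the standard chain-rule computation, but to state the result cleanly on the completed space one should either invoke the continuity of $I$, $(\cdot)^-$ and $(\cdot)^+$ in the noisiness/weak-$\ast$ topology (so that the identity extends from $\mathcal{MP}_{bf}(G)$ to $\mathcal{MP}_b(G)$ by density) or simply note that the process $(W_n)$ stays in $\mathcal{MP}_{bf}(G)$ when $W$ is a DMC, so the elementary chain-rule argument suffices. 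The remaining ingredients — boundedness of $I$ and Doob's theorem — are routine, so the only real care needed is in setting up the martingale and the exact-equality observation above.
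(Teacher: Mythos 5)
Your proof is correct and takes essentially the same route as the paper: both use the conservation identity $I(W^-)+I(W^+)=2I(W)$ to show that $(I(W_n))_{n\geq 0}$ is a bounded martingale, deduce almost sure convergence (hence vanishing increments), and then observe that the same identity forces $|I(W_{n+1})-I(W_n)|=|I(W_n^-)-I(W_n)|$ regardless of $B_{n+1}$. Your added care about boundedness for Doob's theorem and about the identity's validity on $\mathcal{MP}_b(G)$ only makes explicit what the paper leaves implicit.
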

\begin{proof}
It is well known that $I(W^-)+I(W^+)=2I(W)$ for every channel with input alphabet $G$. Hence, we have
\begin{align*}
\mathbb{E}(I(W_{n+1})|W_n)=\frac{1}{2}I(W_n^-)+\frac{1}{2}I(W_n^+)=I(W_n).
\end{align*}
This shows that the process $(I(W_n))_{n\geq 0}$ is a martingale, hence it converges almost surely. This means that the process $\big(|I(W_{n+1})-I(W_n)|\big)_{n\geq 0}$ almost surely converges to zero. On the other hand, we have
\begin{align*}
|I(W_{n+1})-I(W_n)|&=\begin{cases}|I(W_n^-)-I(W_n)|\quad\text{if}\;B_{n+1}=-,\\|I(W_n^+)-I(W_n)|\quad\text{if}\;B_{n+1}=+,\end{cases}\\
&\stackrel{(a)}{=}|I(W_n^-)-I(W_n)|,\\
\end{align*}
where $(a)$ follows from the fact that $I(W_n^-)+I(W_n^+)=2I(W_n)$, which means that $|I(W_n^-)-I(W_n)|=|I(W_n^+)-I(W_n)|$.
\end{proof}

Now define the set
$$\mathcal{POL}_G=\{\MP\in\mathcal{MP}_b(G):\;I(\MP^-)=I(\MP)\}.$$

The next lemma shows that the Blackwell measures of channels with a small $|I(W^-)-I(W)|$ are close (in the noisiness metric sense) to measures in $\mathcal{POL}_G$.

\begin{lemma}
\label{lemAlmostExtreme}
For every $\epsilon>0$, there exists $\delta>0$ such that for every $\MP\in\mathcal{MP}_b(G)$, if $|I(\MP^-)-I(\MP)|<\delta$, then $$d_{G,\ast}^{(o)}(\MP,\mathcal{POL}_G)<\epsilon.$$
\end{lemma}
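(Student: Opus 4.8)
The plan is to argue by contradiction, exploiting the compactness of $\mathcal{MP}_b(G)$ together with the continuity of the relevant functionals in the noisiness/weak-$\ast$ topology. Two facts underlie the argument. First, $\mathcal{MP}_b(G)$ is compact in this topology: the simplex $\Delta_G$ is compact, so the set $\mathcal{MP}(G)$ of probability measures on $\Delta_G$ is weak-$\ast$ compact, and the balancedness constraint $\int_{\Delta_G}p\,d\MP(p)=\pi_G$ is preserved under weak-$\ast$ limits, making $\mathcal{MP}_b(G)$ a closed, hence compact, subset; since $d_{G,\ast}^{(o)}$ metrizes the weak-$\ast$ topology (Remark \ref{remIdent}), $\big(\mathcal{MP}_b(G),d_{G,\ast}^{(o)}\big)$ is a compact metric space, in particular sequentially compact. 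Second, both $\MP\mapsto I(\MP)$ and the polar transformation $\MP\mapsto\MP^-$ are continuous on $\mathcal{MP}_b(G)$ in this topology (Remark \ref{remIdent}, \cite{RajCont}); consequently $f(\MP)=|I(\MP^-)-I(\MP)|$ is continuous and $\mathcal{POL}_G=f^{-1}(\{0\})$ is nonempty (it contains, for instance, the completely noisy channel, where $I(\MP^-)=I(\MP)=0$).

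Suppose, for contradiction, that the claim fails for some $\epsilon>0$. Then for each integer $k\geq 1$ there would exist $\MP_k\in\mathcal{MP}_b(G)$ with $f(\MP_k)=|I(\MP_k^-)-I(\MP_k)|<1/k$ but $d_{G,\ast}^{(o)}(\MP_k,\mathcal{POL}_G)\geq\epsilon$. By sequential compactness I would pass to a subsequence $(\MP_{k_j})_j$ converging in the noisiness/weak-$\ast$ topology to some $\MP_\infty\in\mathcal{MP}_b(G)$.

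Continuity of $f$ then gives $f(\MP_\infty)=\lim_j f(\MP_{k_j})=0$, so $\MP_\infty\in\mathcal{POL}_G$. On the other hand, precisely because $\MP_\infty\in\mathcal{POL}_G$, we have $d_{G,\ast}^{(o)}(\MP_{k_j},\mathcal{POL}_G)\leq d_{G,\ast}^{(o)}(\MP_{k_j},\MP_\infty)\to 0$, contradicting $d_{G,\ast}^{(o)}(\MP_{k_j},\mathcal{POL}_G)\geq\epsilon$ for all $j$. This contradiction establishes the lemma, and it incidentally yields no explicit dependence of $\delta$ on $\epsilon$, which is acceptable since only existence is claimed.

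The main obstacle is not the compactness-and-continuity skeleton, which is routine once the two supporting facts are in place, but rather justifying those facts on the full space $\mathcal{MP}_b(G)$ rather than only on the finitely supported measures $\mathcal{MP}_{bf}(G)$ on which $I$ and the $\pm$ operations are defined intrinsically. The extensions by continuity invoked in Remark \ref{remIdent} must be genuinely continuous on all of $\mathcal{MP}_b(G)$, and the noisiness metric $d_{G,\ast}^{(o)}$ (defined on $\mathcal{MP}_{bf}(G)$ and extended by density) must indeed metrize the weak-$\ast$ topology on $\mathcal{MP}_b(G)$, so that weak-$\ast$ sequential compactness transfers to the metric. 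These properties are supplied by \cite{RajTop,RajCont}, and the correctness of the argument hinges on invoking them on the extended domain rather than merely on the finitely supported case.
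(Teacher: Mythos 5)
Your proof is correct and rests on exactly the same ingredients as the paper's: continuity of $f(\MP)=|I(\MP^-)-I(\MP)|$ in the noisiness/weak-$\ast$ topology (via continuity of $I$ and the $-$ transformation from \cite{RajCont}) together with compactness of $\mathcal{MP}_b(G)$, which force measures with small $f$ to lie near $\mathcal{POL}_G=f^{-1}(\{0\})$. The paper merely packages this as a direct argument---taking $\delta=\inf f$ over the closed (hence compact) set $\{\MP:\;d_{G,\ast}^{(o)}(\MP,\mathcal{POL}_G)\geq\epsilon\}$ and noting the infimum is attained and positive---whereas you run the equivalent sequential-compactness contradiction; the two formulations are interchangeable.
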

\begin{proof}
Define the function $f:\mathcal{MP}_b(G)\rightarrow \mathbb{R}^+$ as
$$f(\MP)=|I(\MP^-)-I(\MP)|.$$
Since the symmetric capacity and the $-$ transformation are continuous in the noisiness/weak-$\ast$ topology (see \cite{RajCont}), the function $f$ is also continuous in the same topology.

Since $\mathcal{POL}_G=f^{-1}(\{0\})$ and since $f$ is continuous, the set $\mathcal{POL}_G$ is closed.

Now for every $\epsilon>0$, define the set
$$\mathcal{POL}_{G,\epsilon}=\{\MP\in\mathcal{MP}_b(G):\;d_{G,\ast}^{(o)}(\MP,\mathcal{POL}_G)<\epsilon\},$$
and let 
\begin{align*}
\delta&=\inf(f(\mathcal{POL}_{G,\epsilon}^c))\\
&=\resizebox{0.44\textwidth}{!}{$\inf \big\{f(\MP):\;\MP\in\mathcal{MP}_b(G)\text{ and }d_{G,\ast}^{(o)}(\MP,\mathcal{POL}_G)\geq\epsilon\big\}$}.
\end{align*}

Since the set $\mathcal{POL}_G$ is closed, we can see that the set $\mathcal{POL}_{G,\epsilon}^c$ is closed as well. Furthermore, since the space $\mathcal{MP}_b(G)$ is compact (see e.g., \cite{RajTop}), the set $\mathcal{POL}_{G,\epsilon}^c$ is compact as well. Therefore, the set $f(\mathcal{POL}_{G,\epsilon}^c)$ is compact in $\mathbb{R}^+$, which means that its infimum is achieved, i.e., there exists $\MP_\epsilon\in\mathcal{POL}_{G,\epsilon}^c$ such that $\delta=f(\MP_\epsilon)$. But $\mathcal{POL}_{G,\epsilon}^c\cap \mathcal{POL}_G=\emptyset$ and $\mathcal{POL}_G=f^{-1}(\{0\})$, so we must have $\delta=f(\MP_\epsilon)>0$.

From the definition of $\delta$, we have:
$$d_{G,\ast}^{(o)}(\MP,\mathcal{POL}_G)\geq\epsilon\;\Rightarrow\;f(\MP)\geq\delta.$$
Hence, by contraposition, we have
$$f(\MP)<\delta\;\Rightarrow\;d_{G,\ast}^{(o)}(\MP,\mathcal{POL}_G)<\epsilon.$$
\end{proof}

In the rest of this section, we analyze the balanced meta-probability measures that are in $\mathcal{POL}_G$ (i.e., those that satisfy $I(\MP^-)=I(\MP)$).

For every $p\in\Delta_G$, we denote the entropy of $p$ as $H(p)$.

For every $p,q\in\Delta_G$, define $p\circledast q\in\Delta_G$ as follows:
$$(p\circledast q)(u_1)=\sum_{u_2\in G}p(u_1+u_2)q(u_2).$$

\begin{lemma}
\label{lemMainEq}
For every $\MP\in\mathcal{MP}_b(G)$, we have
\begin{align*}
|I(&\MP^-)-I(\MP)|\\
&=\int_{\Delta_G}\int_{\Delta_G}(H(p\circledast q)-H(p))d\MP(p)d\MP(q).
\end{align*}
\end{lemma}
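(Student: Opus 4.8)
The plan is to compute both $I(\MP^-)$ and $I(\MP)$ as explicit integrals against the Blackwell measure $\MP$ and then take their difference. The key is the interpretation of the Blackwell measure: if $\MP = \MP_W$ for a channel $W$, then sending a uniform input $X$ through $W$ and observing the output $Y$ induces the posterior $W_Y^{-1}\in\Delta_G$ as a random element distributed according to $\MP$. Under this identification, the symmetric capacity admits a clean formula, namely $I(\MP) = \log|G| - \int_{\Delta_G} H(p)\, d\MP(p)$, since $I(W) = \log|G| - H(X\mid Y)$ and $H(X\mid Y) = \mathbb{E}_Y[H(W_Y^{-1})] = \int_{\Delta_G} H(p)\,d\MP(p)$. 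I would first record this formula (valid by continuity for all of $\mathcal{MP}_b(G)$, per Remark~\ref{remIdent}).

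Next I would determine the Blackwell measure of $W^-$ in terms of that of $W$. The $-$ transformation combines two independent copies of the channel, so its posterior is governed by a pair of independent draws $p, q \sim \MP$. Concretely, for the minus channel the relevant posterior on the single input $u_1$ is obtained by convolving (in the group sense) the two posteriors coming from the two output coordinates; a direct computation from the definition $W^-(y_1,y_2\mid u_1) = \frac{1}{|G|}\sum_{u_2} W(y_1\mid u_1+u_2)W(y_2\mid u_2)$ shows that the posterior distribution of the input is, up to normalization, exactly $p \circledast q$, where $p$ and $q$ are the posteriors induced by $y_1$ and $y_2$ respectively. This is precisely why the operation $\circledast$ was defined just above the lemma. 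The upshot is that the Blackwell measure $\MP^-$ is the pushforward of $\MP\otimes\MP$ under the map $(p,q)\mapsto p\circledast q$, so that
\begin{align*}
\int_{\Delta_G} H(r)\, d\MP^-(r) = \int_{\Delta_G}\int_{\Delta_G} H(p\circledast q)\, d\MP(p)\, d\MP(q).
\end{align*}

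Combining the two ingredients finishes the computation. Applying the capacity formula to both $\MP$ and $\MP^-$ gives
\begin{align*}
I(\MP^-) - I(\MP) = \int_{\Delta_G} H(p)\, d\MP(p) - \int_{\Delta_G}\int_{\Delta_G} H(p\circledast q)\, d\MP(p)\, d\MP(q),
\end{align*}
where the $\log|G|$ terms cancel. Since marginalizing one factor $q$ leaves $p$ unchanged, I would rewrite the single integral $\int H(p)\,d\MP(p)$ as the double integral $\int\int H(p)\, d\MP(p)\, d\MP(q)$, which lets me merge the two terms under one double integral and obtain $\int\int (H(p\circledast q) - H(p))\, d\MP(p)\, d\MP(q)$. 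Finally, I would verify that this quantity is nonnegative---so that the absolute value may be dropped---by noting that $\circledast$ is a group convolution, and convolving a distribution $p$ with any distribution $q$ can only increase entropy, i.e. $H(p\circledast q)\geq H(p)$.

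The main obstacle I anticipate is the second step: carefully tracking the normalization when identifying the posterior of $W^-$ with $p\circledast q$, and confirming that the output weights $\mathbb{P}_{Y_1,Y_2}(y_1,y_2)$ of $W^-$ factor correctly so that the pushforward statement holds as an equality of measures rather than merely matching the entropy integrals. Establishing $H(p\circledast q)\geq H(p)$ to remove the absolute value is routine (it is the concavity/data-processing fact that convolution increases entropy on a finite group), but the bookkeeping in passing from the channel definition to the convolution picture is where the real content lies.
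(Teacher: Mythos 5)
Your proposal is correct and follows essentially the same route as the paper: the capacity formula $I(\MP)=\log|G|-\int_{\Delta_G}H(p)\,d\MP(p)$, the identification of $\MP_{W^-}$ as the pushforward of $\MP_W\times\MP_W$ under $(p,q)\mapsto p\circledast q$, merging the two integrals, and using $H(p\circledast q)\geq H(p)$ to resolve the absolute value, with the extension from DMCs to all of $\mathcal{MP}_b(G)$ by weak-$\ast$ continuity. The only difference is presentational: the two ingredients you propose to verify by direct computation (the factorization of $\mathbb{P}_{Y_1,Y_2}$ and the exact normalization of the posterior of $W^-$, both of which do check out) are simply cited by the paper as Propositions 8 and 10 of \cite{RajCont}.
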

\begin{proof}
It is sufficient to show this for Blackwell measures of DMCs (because we can then extend the equation to $\mathcal{MP}_b(G)$ by continuity).

Let $W$ be a DMC with input alphabet $G$. We have $I(W^-)\leq I(W)$ so $|I(\MP_W^-)-I(\MP_W)|=I(W)-I(W^-)$. From \cite[Proposition 8]{RajCont}, we have
\begin{align*}
I(W)&=\log|G|-\int_{\Delta_G}H(p)d\MP_W(p)\\
&=\log|G|-\int_{\Delta_G}\int_{\Delta_G}H(p)d\MP_W(p)d\MP_W(q).
\end{align*}

Similarly,
\begin{align*}
&I(W^-)\\
&=\log|G|-\int_{\Delta_G}H(p)d\MP_{W^-}(p)\\
&\stackrel{(a)}{=}\log|G|-\int_{\Delta_G}H(p)d(\MP_{W},\MP_W)^-(p)\\
&\stackrel{(b)}{=}\log|G|-\int_{\Delta_G}H(p)dC^{-,+}_\#(\MP_{W}\times\MP_W)(p)\\
&\stackrel{(c)}{=}\resizebox{0.45\textwidth}{!}{$\displaystyle\log|G|-\int_{\Delta_G\times\Delta_G}H(C^{-,+}(p,q))d(\MP_{W}\times \MP_W)(p,q)$}\\
&\stackrel{(d)}{=}\log|G|-\int_{\Delta_G}\int_{\Delta_G}H(p\circledast q)d\MP_{W}(p)d\MP_W(q),
\end{align*}
where (a) follows from \cite[Proposition 10]{RajCont}, (b) follows from the definition of the $(-,+)$-convolution (see Page 20 of \cite{RajCont}), (c) follows from the properties of the push-forward probability measure, and (d) follows from the definition of the $C^{-,+}$ map (see Page 20 of \cite{RajCont}) and Fubini's theorem.

The lemma now follows from the fact that $|I(W^-)-I(W)|=I(W)-I(W^-)$.
\end{proof}

For every $p\in\Delta_G$ and every $u\in G$, define $p_u\in\Delta_G$ as
$$p_{u}(x)=p(x+u).$$

Let $p,q\in\Delta_G$. We have
$$p\circledast q=\sum_{u_2\in G}q(u_2)p_{u_2}.$$
Due to the strict concavity of entropy, we have $H(p\circledast q)\geq H(p)$. Moreover, we have
\begin{equation}
\begin{aligned}
H(p\circledast q)= H(p)\;&\Leftrightarrow\;p_{u_2}=p_{u_2'}, \forall u_2,u_2'\in\supp(q)\\
&\Leftrightarrow\;p_u=p,\forall u\in\Delta\supp(q)\\
&\Leftrightarrow\;p_u=p,\forall u\in\langle\Delta\supp(q)\rangle,
\end{aligned}
\label{eqEquiv}
\end{equation}
where $\Delta\supp(q)=\{u_2-u_2':\;u_2,u_2'\in\supp(q)\}$, and $\langle\Delta\supp(q)\rangle$ is the subgroup of $G$ generated by $\Delta\supp(q)$.

\begin{lemma}
\label{lemMainCond}
Let $\MP\in\mathcal{MP}(G)$. We have $I(\MP^-)=I(\MP)$ if and only if for every $p,q\in\supp(\MP)$ we have $p=p_u$ for every $u\in\langle\Delta\supp(q)\rangle$.
\end{lemma}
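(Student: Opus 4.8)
The plan is to reduce the lemma, via the integral identity of Lemma~\ref{lemMainEq} together with the pointwise characterization in \eqref{eqEquiv}, to a purely measure-theoretic statement about when a nonnegative continuous integrand vanishes, and then to pass from ``almost everywhere'' to ``everywhere on the support'' by exploiting continuity and the definition of the support of a measure.

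First I would set $g(p,q)=H(p\circledast q)-H(p)$ for $p,q\in\Delta_G$. This $g$ is continuous on $\Delta_G\times\Delta_G$, since entropy is continuous on the simplex $\Delta_G$ and $\circledast$ depends polynomially on its arguments. By Lemma~\ref{lemMainEq},
$$|I(\MP^-)-I(\MP)|=\int_{\Delta_G}\int_{\Delta_G}g(p,q)\,d\MP(p)\,d\MP(q),$$
and by \eqref{eqEquiv} we have $g\ge 0$ everywhere, with $g(p,q)=0$ precisely when $p=p_u$ for every $u\in\langle\Delta\supp(q)\rangle$. Since $I(\MP^-)\le I(\MP)$, the hypothesis $I(\MP^-)=I(\MP)$ is equivalent to the vanishing of this double integral, which (as $g\ge 0$) is in turn equivalent to $g=0$ holding $(\MP\times\MP)$-almost everywhere. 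Hence the condition in the lemma, namely $g(p,q)=0$ for every $p,q\in\supp(\MP)$, is exactly what I must match against this almost-everywhere statement.

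The reverse implication is the easier half. Assuming $g(p,q)=0$ for all $p,q\in\supp(\MP)$, I would apply Tonelli's theorem to the nonnegative integrand: for each fixed $q\in\supp(\MP)$ the inner integral $\int_{\Delta_G}g(p,q)\,d\MP(p)$ vanishes, because $\MP$ assigns full mass to $\supp(\MP)$ and $g(\cdot,q)$ is a nonnegative function that is zero on $\supp(\MP)$; the outer integral then vanishes as well, its integrand being zero for $\MP$-almost every $q$. Thus $|I(\MP^-)-I(\MP)|=0$.

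The forward implication is the main obstacle, since it requires upgrading an almost-everywhere identity to a pointwise one on the support. I would argue by contraposition: suppose $g(p_0,q_0)>0$ for some $(p_0,q_0)\in\supp(\MP)\times\supp(\MP)$. By continuity of $g$ there is an open product neighborhood $U\times V$ of $(p_0,q_0)$ on which $g>0$. Since $p_0,q_0\in\supp(\MP)$, every open neighborhood of each carries positive mass, so $\MP(U)>0$ and $\MP(V)>0$, whence $(\MP\times\MP)(U\times V)=\MP(U)\,\MP(V)>0$. This exhibits a set of positive $(\MP\times\MP)$-measure on which $g>0$, forcing the double integral to be strictly positive and $I(\MP^-)<I(\MP)$; contraposing yields the implication. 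The two delicate points I expect to verify are the continuity of $g$ (hence the openness of $\{g>0\}$) and the characterization of $\supp(\MP)$ as the set of points all of whose neighborhoods carry positive $\MP$-mass. I would also confirm that Lemma~\ref{lemMainEq} applies here, as it is stated for $\mathcal{MP}_b(G)$ while the integral formula is what I invoke.
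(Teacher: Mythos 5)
Your proof is correct and takes essentially the same route as the paper: the paper likewise sets $F(p,q)=H(p\circledast q)-H(p)$, asserts that a continuous nonnegative integrand has vanishing double integral if and only if it vanishes on $\supp(\MP)\times\supp(\MP)$, and concludes via Lemma~\ref{lemMainEq} and \eqref{eqEquiv}. Your explicit two-direction argument (Tonelli for one direction, the open-neighborhood/positive-mass characterization of the support for the other) merely fills in the measure-theoretic step that the paper states without proof.
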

\begin{proof}
Define the function $F:\Delta_G\times\Delta_G\rightarrow\mathbb{R}^+$ as
$$F(p,q)=H(p\circledast q)-H(p).$$

Since $F$ is continuous and positive, the integral
$$\int_{\Delta_G}\int_{\Delta_G}F(p,q)d\MP(p)d\MP(q)$$
is equal to zero if and only if the function $F$ is equal to zero on $\supp(\MP)\times\supp(\MP)$.

The lemma now follows from Lemma \ref{lemMainEq} and Equation \eqref{eqEquiv}.
\end{proof}

\begin{lemma}
\label{lemMainLemma}
$\mathcal{POL}_G=\{\MP_{D}:D\in\mathcal{DH}_G\}$.
\end{lemma}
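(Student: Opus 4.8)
The goal is to prove the set equality $\mathcal{POL}_G=\{\MP_{D}:D\in\mathcal{DH}_G\}$. My plan is to establish the two inclusions separately, using Lemma \ref{lemMainCond} as the workhorse characterization of membership in $\mathcal{POL}_G$, namely that $\MP\in\mathcal{POL}_G$ if and only if every $p\in\supp(\MP)$ is invariant under translation by the subgroup $\langle\Delta\supp(q)\rangle$ for every $q\in\supp(\MP)$.

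\emph{The easy inclusion ($\supseteq$).} First I would verify that for any subgroup $H\leq G$, the Blackwell measure $\MP_{D_H}$ lies in $\mathcal{POL}_G$. Since $D_H$ outputs the coset $x+H$ deterministically, the posterior $W_y^{-1}$ given a received coset $A=x_0+H$ is the uniform distribution on $A$. Thus $\supp(\MP_{D_H})$ consists of the $|G/H|$ uniform distributions on the cosets of $H$, each appearing with probability $1/|G/H|$. For any such $p$ (uniform on a coset of $H$) and any such $q$, the support of $q$ is a coset of $H$, so $\Delta\supp(q)=H$ and hence $\langle\Delta\supp(q)\rangle=H$; and $p$, being uniform on a coset of $H$, satisfies $p_u=p$ for all $u\in H$. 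By Lemma \ref{lemMainCond} this gives $\MP_{D_H}\in\mathcal{POL}_G$.

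\emph{The harder inclusion ($\subseteq$).} Given $\MP\in\mathcal{POL}_G$, I must exhibit a subgroup $H$ with $\MP=\MP_{D_H}$. The natural candidate is $H=\langle\bigcup_{q\in\supp(\MP)}\Delta\supp(q)\rangle$, the subgroup generated by all the support-differences. By Lemma \ref{lemMainCond}, every $p\in\supp(\MP)$ is invariant under translation by $\langle\Delta\supp(q)\rangle$ for each individual $q$, and I would argue this forces invariance under the full $H$; consequently each $p$ is constant on cosets of $H$, i.e. $p$ is a convex combination (in fact, given its own support structure, the uniform distribution) on a union of $H$-cosets. Combining $H$-invariance of $p$ with the fact that $\Delta\supp(p)\subseteq H$ (since $p$ itself is one of the $q$'s) should pin down each $p$ to be \emph{exactly} the uniform distribution on a single coset of $H$. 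The remaining work is to use the balancedness of $\MP$ (its barycenter is $\pi_G$) to force every coset of $H$ to appear in the support with the correct weight $1/|G/H|$, yielding precisely $\MP_{D_H}$.

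The main obstacle I anticipate is the final step of the hard inclusion: showing that the $H$-invariant distributions in $\supp(\MP)$ are each supported on a \emph{single} coset (rather than several) and that \emph{all} cosets are represented with equal weight. The support-invariance argument shows $p$ is uniform on its support and that $\supp(p)$ is $H$-invariant, so $\supp(p)$ is a union of $H$-cosets; the subtlety is ruling out $\supp(p)$ spanning multiple cosets, which should follow because $\Delta\supp(p)\subseteq H$ forces $\supp(p)$ to lie in one coset. Once each $p$ is uniform on exactly one coset, the balancedness condition $\int p\,d\MP=\pi_G$ becomes a linear constraint that is satisfied only when the coset-weights are uniform, completing the identification $\MP=\MP_{D_H}$.
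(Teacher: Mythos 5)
Your proposal is correct and takes essentially the same route as the paper's proof: both inclusions rest on Lemma \ref{lemMainCond}, with the hard direction showing each $p\in\supp(\MP)$ is uniform on a single coset of a common subgroup $H$ (via $\Delta\supp(p)\subseteq H$ together with $H$-invariance of $p$) and then invoking balancedness to force every coset weight to equal $1/|G/H|$. The only cosmetic difference is that you define $H$ as the subgroup generated by $\bigcup_{q\in\supp(\MP)}\Delta\supp(q)$, whereas the paper defines $H_p=\langle\Delta\supp(p)\rangle$ for each $p$ and shows these coincide pairwise; the two constructions produce the same subgroup.
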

\begin{proof}
Let $H$ be a subgroup of $G$. It is easy to see that $\displaystyle\MP_{D_H}=\frac{1}{|G/H|}\sum_{A\in G/H}\delta_{\pi_A},$
where $\delta_{\pi_A}$ is a Dirac measure centered at $\pi_A$ (the uniform distribution on $A$). It is easy to check that $\MP_{D_H}$ satisfies the condition of Lemma \ref{lemMainCond}, hence $I(\MP_{D_H}^-)=I(\MP_{D_H})$ and so $$\{\MP_{D}:D\in\mathcal{DH}_G\}\subset \mathcal{POL}_G.$$

Now suppose that $\MP\in \mathcal{POL}_G$. For every $p\in\supp(\MP)$, define $A_p=\supp(p)$ and $H_p=\langle \Delta A_p\rangle$. Lemma \ref{lemMainCond} shows that $p=p_u$ for every $u\in H_p$. Let $x,x'\in A_p$. We have:
$$p(x')=p(x+x'-x)=p_{x'-x}(x)\stackrel{(a)}{=}p(x),$$
where (a) follows from the fact that $x'-x\in H_p$. This shows that $p$ is the uniform distribution on $A_p$. Moreover, for every $u\in H_p$, we have
$$p(x+u)=p_{-u}(x+u)=p(x+u-u)=p(x)>0.$$
This implies that $A_p=x+H_p$, which means that the support of $p$ is a coset of the subgroup $H_p$.

Now let $p,q\in\supp(\MP)$. Let $x\in A_p$ and $u\in H_q$. Lemma \ref{lemMainCond} implies that
$$p(x+u)=p_u(x)=p(x)>0,$$
hence $u=x+u-x\in H_p$. This shows that $H_q\subset H_p$. Similarly, we can show that $H_p\subset H_q$. Therefore, $H_p=H_q$ for every $p,q\in\supp(\MP)$. This means that the support of $\MP$ consists of uniform distributions over cosets of the same subgroup of $G$. Let $H$ be this subgroup.

The above discussion shows that
$$\MP=\sum_{A\in G/H}\alpha_A\delta_{\pi_A},$$
for some distribution $\{\alpha_A:\;A\in G/H\}$ over the quotient group $G/H$. Fix $A\in G/H$ and let $x\in A$. We have:
\begin{align*}
\frac{1}{|G|}&=\pi_G(x)\stackrel{(a)}{=}\int_{\Delta_G} p(x) d\MP(p)\\
&=\sum_{B\in G/H}\alpha_B \pi_B(x)=\alpha_A \frac{1}{|A|}=\frac{\alpha_A}{|H|},
\end{align*}
where (a) follows from the fact that $\MP$ is balanced. Hence $\displaystyle\alpha_A=\frac{|H|}{|G|}=\frac{1}{|G/H|}$, and so $\displaystyle\MP=\frac{1}{|G/H|}\sum_{A\in G/H}\delta_{\pi_A}$. This means that $\MP=\MP_{D_H}$, thus $\MP\in \{\MP_{D}:D\in\mathcal{DH}_G\}$, and so $\mathcal{POL}_G\subset\{\MP_{D}:D\in\mathcal{DH}_G\}$.

We conclude that $\mathcal{POL}_G=\{\MP_{D}:D\in\mathcal{DH}_G\}$.
\end{proof}

\begin{theorem}
\label{theConvergence}
Let $(W_n)_{n\geq 0}$ be the channel-valued process defined in Definition \ref{defPolProcess}. Almost surely, there exists a subgroup $H$ of $G$ such that the sequence $(\hat{W}_n)_{n\geq 0}$ converges to $\hat{D}_H$ in the noisiness/weak-$\ast$ topology.
\end{theorem}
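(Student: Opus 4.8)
The plan is to combine the three preceding lemmas to show that $\hat{W}_n$ approaches the finite set $\mathcal{POL}_G$, and then to upgrade this to genuine convergence to a single point by ruling out oscillation among the finitely many elements of $\mathcal{POL}_G$.

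First I would note that Lemma \ref{lemConvergence} gives, almost surely, $|I(W_n^-)-I(W_n)|\to 0$. Fixing $\epsilon=1/k$ and taking the $\delta=\delta_k$ provided by Lemma \ref{lemAlmostExtreme}, the event $\{|I(W_n^-)-I(W_n)|<\delta_k\ \text{eventually}\}$ has probability one, and on it $d_{G,\ast}^{(o)}(\hat{W}_n,\mathcal{POL}_G)<1/k$ eventually. Intersecting over $k\geq 1$ yields, almost surely, $d_{G,\ast}^{(o)}(\hat{W}_n,\mathcal{POL}_G)\to 0$. By Lemma \ref{lemMainLemma}, $\mathcal{POL}_G=\{\hat{D}_H:\ H\text{ a subgroup of }G\}$, which is a \emph{finite} set since $G$ has finitely many subgroups.

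Convergence to a finite set does not by itself give convergence to a single point, and ruling out oscillation is the crux of the argument. Since the points of $\mathcal{POL}_G$ are distinct, set $\eta=\min\{d_{G,\ast}^{(o)}(\hat{D}_H,\hat{D}_{H'}):\ H,H'\text{ distinct subgroups}\}>0$. Once $d_{G,\ast}^{(o)}(\hat{W}_n,\mathcal{POL}_G)<\eta/2$, the nearest element $\hat{D}_{H^{(n)}}\in\mathcal{POL}_G$ is unique (two elements within $\eta/2$ of $\hat{W}_n$ would be within $\eta$ of each other), and it remains to prove that the index $H^{(n)}$ is eventually constant along the almost-surely convergent trajectory.

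For this I would use two facts: (i) each $\hat{D}_H$ is a fixed point of both polar transforms, i.e.\ $\hat{D}_H^-=\hat{D}_H^+=\hat{D}_H$ --- a direct computation shows that from the outputs of $D_H^{\pm}$ the relevant coset of the input is recovered with certainty; and (ii) the maps $\hat{W}\mapsto\hat{W}^{\pm}$ are continuous in the noisiness/weak-$\ast$ topology \cite{RajCont}. Combining (i) and (ii), continuity at the fixed point $\hat{D}_H$ furnishes, for each $H$, a radius $\rho_H>0$ such that $d_{G,\ast}^{(o)}(\hat{W},\hat{D}_H)<\rho_H$ implies $d_{G,\ast}^{(o)}(\hat{W}^{\pm},\hat{D}_H)<\eta/4$ for both signs; take $\rho=\min_H\rho_H$ and shrink it so that $\rho\leq\eta/4$. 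On the full-probability event above, choose $N$ with $d_{G,\ast}^{(o)}(\hat{W}_n,\mathcal{POL}_G)<\rho$ for all $n\geq N$. Then for $n\geq N$, writing $H=H^{(n)}$, the trapping property gives $d_{G,\ast}^{(o)}(\hat{W}_{n+1},\hat{D}_H)=d_{G,\ast}^{(o)}(\hat{W}_n^{B_{n+1}},\hat{D}_H)<\eta/4$, while $d_{G,\ast}^{(o)}(\hat{W}_{n+1},\hat{D}_{H^{(n+1)}})<\rho\leq\eta/4$; if $H^{(n+1)}\neq H^{(n)}$ the triangle inequality would force $d_{G,\ast}^{(o)}(\hat{D}_{H^{(n)}},\hat{D}_{H^{(n+1)}})<\eta/2<\eta$, contradicting the definition of $\eta$. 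Hence $H^{(n)}=H^{(N)}=:H$ for all $n\geq N$, and since $d_{G,\ast}^{(o)}(\hat{W}_n,\hat{D}_H)=d_{G,\ast}^{(o)}(\hat{W}_n,\mathcal{POL}_G)\to 0$, we conclude $\hat{W}_n\to\hat{D}_H$ almost surely. The only step requiring genuine care is this anti-oscillation argument, which rests on the fixed-point property together with the continuity of the polar transforms; everything else is assembling the prior lemmas.
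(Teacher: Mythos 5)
Your proposal is correct, and it follows the same skeleton as the paper's proof: Lemma \ref{lemConvergence} gives $|I(W_n^-)-I(W_n)|\to 0$ almost surely, Lemma \ref{lemAlmostExtreme} (applied pathwise with $\epsilon=1/k$ and a countable intersection of probability-one events) upgrades this to $d_{G,\ast}^{(o)}(\hat{W}_n,\mathcal{POL}_G)\to 0$, and Lemma \ref{lemMainLemma} identifies $\mathcal{POL}_G$ with the finite set $\{\MP_{D_H}:H\text{ a subgroup of }G\}$. Where you diverge is precisely at the step you call the crux: the paper simply asserts that ``since $\mathcal{POL}_G$ is finite, the sequence $(\MP_{W_n})_{n\geq 0}$ converges to an element in $\mathcal{POL}_G$,'' leaving the anti-oscillation issue implicit --- as literally stated, that inference is invalid, since a sequence can shuttle between shrinking neighborhoods of two distinct points of a finite set while its distance to the set tends to zero. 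Your trapping argument supplies exactly the missing mechanism: the fixed-point property $\hat{D}_H^-=\hat{D}_H^+=\hat{D}_H$ (which your coset computation correctly verifies at the level of Blackwell measures), the continuity of $\hat{W}\mapsto\hat{W}^{\pm}$ from \cite{RajCont}, and the triangle inequality against the minimal separation $\eta$ of the finitely many points force the nearest subgroup $H^{(n)}$ to be eventually constant. It is worth noting that the obvious cheaper substitute --- invoking the almost-sure convergence of the martingale $I(W_n)$ to pin down the limit point --- does not work here, because distinct subgroups of equal index satisfy $I(D_{H})=\log|G/H|=I(D_{H'})$ and so are indistinguishable by $I$; your continuity-plus-fixed-point argument handles this case as well, and is therefore not merely a stylistic elaboration but a genuine completion of the step the paper glosses over. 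All details check out: $\eta>0$ because distinct subgroups yield distinct Blackwell measures, the nearest point is unique once $d_{G,\ast}^{(o)}(\hat{W}_n,\mathcal{POL}_G)<\eta/2$, and taking $\rho=\min(\min_H\rho_H,\eta/4)$ over the finitely many subgroups is legitimate.
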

\begin{proof}
Lemma \ref{lemConvergence} shows that almost surely, the sequence $\big(|I(W_n^-)-I(W_n)|\big)_{n\geq 0}$ converges to zero. Let $(W_n)_{n\geq 0}$ be a sample of the process for which the sequence $\big(|I(W_n^-)-I(W_n)|\big)_{n\geq 0}$ converges to zero. Lemma \ref{lemAlmostExtreme} implies that $\big(d_{G,\ast}^{(o)}(\MP_{W_n},\mathcal{POL}_G)\big)_{n\geq 0}$ converges to zero. Now since $\mathcal{POL}_G$ is finite (see Lemma \ref{lemMainLemma}), the sequence $(\MP_{W_n})_{n\geq 0}$ converges to an element in $\mathcal{POL}_G$. Lemma \ref{lemMainLemma} now implies that there exists a subgroup $H$ of $G$ such that the sequence $(\hat{W}_n)_{n\geq 0}$ converges to $\hat{D}_H$ in the noisiness/weak-$\ast$ topology.
\end{proof}

\begin{corollary}
For any channel functional $F:\DMC_{\mathcal{X},\ast}\rightarrow\mathbb{R}$ which is invariant under channel equivalence, and which is continuous in the noisiness/weak-$\ast$ topology, the process $\big(F(W_n)\big)_{n\geq 0}$ almost surely converges. More precisely, $F(W_n)$ converges to $F(D_H)$ if $\hat{W}_n$ converges to $\hat{D}_H$.
\end{corollary}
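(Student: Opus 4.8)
The plan is to deduce this directly from Theorem \ref{theConvergence} together with the two hypotheses on $F$. First I would note that because $F$ is invariant under channel equivalence, it descends to a well-defined functional $\hat{F}:\DMC_{G,\ast}^{(o)}\to\mathbb{R}$ given by $\hat{F}(\hat{W})=F(W)$ for any representative $W\in\hat{W}$. This induced functional on the quotient is precisely the object on which the hypothesis of continuity in the noisiness/weak-$\ast$ topology $\mathcal{T}_{G,\ast}^{(o)}$ is meaningfully imposed, and by construction $F(W_n)=\hat{F}(\hat{W}_n)$ for every $n\geq 0$.

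Next I would invoke Theorem \ref{theConvergence}: on an event of probability one, there exists a (sample-dependent) subgroup $H$ of $G$ such that $\hat{W}_n\to\hat{D}_H$ in $\mathcal{T}_{G,\ast}^{(o)}$. On this almost-sure event, continuity of $\hat{F}$ at $\hat{D}_H$ yields $\hat{F}(\hat{W}_n)\to\hat{F}(\hat{D}_H)$, that is, $F(W_n)\to F(D_H)$. Since this reasoning applies on an event of probability one, the real-valued sequence $\big(F(W_n)\big)_{n\geq 0}$ converges almost surely, and along each sample path its limit is $F(D_H)$ with $H$ the subgroup furnished by the theorem. This is exactly the asserted refined conclusion.

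The argument is essentially immediate once Theorem \ref{theConvergence} is in hand, so I do not expect a substantial analytic obstacle; the only point requiring care is a consistency check between the two appearances of the topology. One must verify that the convergence produced by the theorem takes place in the very same topology $\mathcal{T}_{G,\ast}^{(o)}$ in which $F$ is assumed continuous, and that equivalence-invariance is precisely what permits $F$ to be read as a genuine function on the quotient $\DMC_{G,\ast}^{(o)}$, so that topological continuity can legitimately be applied to the sequence of equivalence classes $\hat{W}_n$ rather than to the channels themselves. Both of these are bookkeeping matters rather than genuine difficulties, and once they are settled the corollary is a one-line application of the sequential characterization of continuity.
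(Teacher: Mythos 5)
Your proposal is correct and is exactly the argument the paper intends: the corollary is stated as an immediate consequence of Theorem \ref{theConvergence}, obtained by letting $F$ descend (via equivalence-invariance) to a functional on $\DMC_{G,\ast}^{(o)}$ and applying sequential continuity in $\mathcal{T}_{G,\ast}^{(o)}$ along the almost-sure event where $\hat{W}_n\to\hat{D}_H$. Your care about the sample-dependence of $H$ and about continuity being imposed on the quotient space is the right bookkeeping, and nothing more is needed.
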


\section{Discussion}

Our proof can be used (verbatim) to show the almost sure convergence of the polarization process associated to any channel whose equivalence class is determined by the Blackwell measure. This family of channels is large and contains almost any ``sensible" channel we can think of \cite{torgersen}.

\section*{Acknowledgment}
I would like to thank Emre Telatar and Maxim Raginsky for helpful discussions.

\bibliographystyle{IEEEtran}
\bibliography{bibliofile}

\end{document}